\def\@tocline#1#2#3#4#5#6#7{\relax
  \ifnum #1>\c@tocdepth 
  \else
    \par \addpenalty\@secpenalty\addvspace{#2}%
    \begingroup \hyphenpenalty\@M
    \@ifempty{#4}{%
      \@tempdima\csname r@tocindent\number#1\endcsname\relax
    }{%
      \@tempdima#4\relax
    }%
    \parindent\z@ \leftskip#3\relax \advance\leftskip\@tempdima\relax
    \rightskip\@pnumwidth plus4em \parfillskip-\@pnumwidth
    #5\leavevmode\hskip-\@tempdima
      \ifcase #1
       \or\or \hskip 1em \or \hskip 2em \else \hskip 3em \fi%
      #6\nobreak\relax
      \dotfill
      \hbox to\@pnumwidth{\@tocpagenum{#7}}
    \par
    \nobreak
    \endgroup
  \fi}
\newcommand{\bdm}{\begin{displaymath}}
\newcommand{\edm}{\end{displaymath}}
\newcommand{\bdn}{\begin{eqnarray}}
\newcommand{\edn}{\end{eqnarray}}
\newcommand{\bay}{\begin{array}{c}}
\newcommand{\eay}{\end{array}}
\newcommand{\ben}{\begin{enumerate}}
\newcommand{\een}{\end{enumerate}}
\newcommand{\R}{\mathbb{R}}
\newcommand{\C}{\mathbb{C}}
\newcommand{\cW}{\mathcal{W}}
\newcommand{\cL}{\mathcal{L}}
\newcommand{\PsiLau}{\Psi_{\rm Lau}}
\newcommand{\PsiLaun}{\Psi_{\rm Lau} ^{(\ell)}}
\newcommand{\cLau}{c _{\rm Lau}}
\newcommand{\rhoLaun}{\rho_{\rm Lau} ^{(\ell)}}
\newcommand{\cH}{\mathcal{H}}
\newtheorem{theorem}{Theorem}[section]
\newtheorem{lemma}[theorem]{Lemma}
\newtheorem{conjecture}[theorem]{Conjecture}
\theoremstyle{definition}
\theoremstyle{remark}
\newcommand{\beq}{\begin{equation}}
\newcommand{\eeq}{\end{equation}}
\newcommand{\one}{{\ensuremath {\mathds 1} }}
\numberwithin{equation}{section}
\newcommand{\LLLb}{\mathrm{LLL}_{\rm sym} ^N}
\newcommand{\LLLf}{\mathrm{LLL}_{\rm asym} ^N}
\begin{document}

\title{On the Laughlin function and its perturbations}

\author{Nicolas Rougerie}
\address{Universit\'e Grenoble Alpes \& CNRS, LPMMC (UMR 5493), B.P. 166, F-38042 Grenoble, France}
\email{nicolas.rougerie@lpmmc.cnrs.fr}





\date{June, 2019}

\begin{abstract}
The Laughlin state is an ansatz for the ground state of a system of 2D quantum particles submitted to a strong magnetic field and strong interactions. The two effects conspire to generate strong and very specific correlations between the particles. 

I present a mathematical approach to the rigidity these correlations display in their response to perturbations. This is an important ingredient in the theory of the fractional quantum Hall effect. The main message is that potentials generated by impurities and residual interactions can be taken into account by generating uncorrelated quasi-holes on top of Laughlin's wave-function. 


An appendix contains a conjecture (not due to me) that should be regarded as a major open mathematical problem of the field, relating to the spectral gap of a certain zero-range interaction.

\medskip

\noindent Expository text based on joint works with Elliott H. Lieb, Alessandro Olgiati, Sylvia Serfaty and Jakob Yngvason.
\end{abstract}

\maketitle

\tableofcontents

This text is based on a talk given at the Laurent Schwartz X-EDP seminar in Bures-sur-Yvette in December 2018 (thanks to Fran\c{c}ois Golse and Frank Merle for the invitation). The talk was mostly based on~\cite{LieRouYng-17,RouYng-17}, but the following exposition also contains results obtained later~\cite{OlgRou-19}. A conjecture, folklore in the condensed matter physics community, is presented in an appendix.

\newpage

\section{Physical motivation}

\subsection{Laughlin's state}

We are looking for a wave-function for $N$ 2D planar quantum spinless electrons. It should be a square-integrable function $\Psi_N: \R^{2N} \mapsto \C$, normalized
\begin{equation}\label{eq:mass}
 \int_{\R^{2N}} |\Psi_N| ^2 = 1
\end{equation}
and (because of the Pauli principle) antisymmetric with respect to exchange of particle labels:
\begin{equation}\label{eq:antisym}
\Psi_N (x_{\sigma(1)},\ldots,x_{\sigma(N)}) = (-1) ^{\mathrm{sign} (\sigma)} \Psi_N (x_1,\ldots,x_N) 
\end{equation}
for all $x_1,\ldots,x_N \in \R^{2}$ and all permutations $\sigma$ of the $N$ labels. 

More precisely we study the Laughlin wave-function from 1983~\cite{Laughlin-83,Laughlin-87}
\begin{equation}\label{eq:PsiLau}
 \PsiLau ^{(\ell)} (z_1,\ldots,z_N):= \cLau \prod_{1\leq i < j \leq N} (z_i-z_j) ^\ell e^{-\frac{B}{4}\sum_{j=1} ^N |z_j| ^2}
\end{equation}
where the planar coordinates $x_1,\ldots,x_N $ of the electrons are identified with complex numbers $z_1,\ldots,z_N$, $B>0$, $\ell$ is an \emph{odd}\footnote{In order to satisfy~\eqref{eq:antisym}, but even values are relevant for bosons.} integer and $\cLau=\cLau (\ell)$ is a $L^2$-normalization constant. 

As we shall explain shortly, it is natural to consider general perturbations of the Laughlin function, of the form
\begin{equation}\label{eq:PsiF}
 \Psi_F (z_1,\ldots,z_N):= c_F \PsiLaun (z_1,\ldots,z_N) F (z_1,\ldots,z_N)
\end{equation}
where $F$ is analytic in all its complex arguments and symmetric 
\begin{equation}\label{eq:sym}
F (z_{\sigma(1)},\ldots,z_{\sigma(N)}) =  F (z_1,\ldots,z_N) 
\end{equation}
for any permutation $\sigma$, so that $\Psi_F$ retains the symmetry of $\PsiLaun$. Again, $c_F$ is a $L^2$-normalization constant. 

It is of interest in fractional quantum Hall physics to restrict the general perturbation~\eqref{eq:PsiLau} to a much more restricted class
\begin{equation}\label{eq:Psif}
 \Psi_f (z_1,\ldots,z_N):= c_f \PsiLaun(z_1,\ldots,z_N) \prod_{j=1} ^N f(z_j)
\end{equation}
where $f:\C \mapsto \C$ is analytic and $c_f$ is a normalization constant. 

The purpose of this expository note is to review mathematical theorems of the following flavor:

\smallskip

\begin{multline}\label{eq:meta thm}
 \mbox{Variational problem on the full set } \{ \Psi_F \} \underset{N\to \infty}{\longrightarrow} \\ \mbox{Same variational problem, on the restricted set } \{ \Psi_f \}.
\end{multline}

\medskip

The class of ``natural'' variational problems we consider is described in more details below. The motivation is from fractional quantum Hall physics~\cite{Jain-07,Girvin-04,Goerbig-09,StoTsuGos-99,Laughlin-99}. More details can be found in the original research papers this note summarizes~\cite{LieRouYng-17,OlgRou-19,RouSerYng-13a,RouSerYng-13b,RouYng-14,RouYng-15,RouYng-17}. Other expository versions are in~\cite{LieRouYng-16,Rougerie-INSMI} and~\cite[Chapter~3]{Rougerie-hdr}. 

There are two ways, both instructive, to interpret a result of the form~\eqref{eq:meta thm}:

\medskip

\noindent \textbf{Absence of superfluous correlations}. The wave-function $\PsiLaun$ we start from contains pair correlations between electrons, because of the Jastrow factors $(z_i-z_j) ^\ell$. Since $F$ in~\eqref{eq:PsiF} is analytic, it cannot ``undo'' these correlations by canceling a factor: $\Psi_F$ can only be more correlated than $\PsiLaun$. What the statement says is that, in the situations of our concern, it is physically useless to add more correlations: the minimizers of the variational problem have an uncorrelated $F = f ^{\otimes N}$. 

\medskip

\noindent \textbf{Emergence of quasi-particles}. The analytic function $f$ defining~\eqref{eq:Psif} is essentially a polynomial. Write it as 
 \begin{equation}\label{eq:f zeroes}
 f (z) = \prod_{k=1} ^K (z-a_k) ^{m_k}. 
 \end{equation}
The complex numbers $a_1,\ldots,a_K$ are interpreted as the locations of quasi-particles (in fact, quasi-holes) generated from the Laughlin state. These are the effective particles responsible for the phenomenology of the fractional quantum Hall effect (FQHE), in particular the charge transport~\cite{SamGlaJinEti-97,MahaluEtal-97,YacobiEtal-04} in lumps of $\ell^{-1}\times $ the electron's elementary charge (the $\ell$ is that appearing in~\eqref{eq:PsiLau}). One can indeed argue  that each $a_k$ corresponds to a defect of density  $\sim m_k \ell^{-1}$ in the charge density of $\Psi_f$, as compared to the (approximately flat, see Theorem~\ref{thm:Lau} below) density of the bare $\PsiLaun$.

\medskip

We shall refer to states of the form~\eqref{eq:Psif} as the \emph{Laughlin phase}. As regards the Laughlin quasi-holes, we stress that they are here seen as classical particles, parameters in a many-body wave-function. If one considers them as true quantum particles instead, they should be be thought of~\cite{AroSchWil-84,LunRou-16} as anyons with statistics parameter $-1/\ell$. That is, they are neither bosons nor fermions, but something in between. We shall not discuss this topic here. 

\subsection{Physics background}

Let us quickly explain what $\PsiLaun$ (and its descendants) are supposed to do. In fact, it is constructed out of the following considerations: 

\medskip

\noindent \boxed{\textbf{A}} It is of the form ``analytic $\times$ gaussian'' or, in other words, made entirely of single-particle orbitals belonging to the lowest Landau level of a magnetic field perpendicular to the plane, of intensity $B$. This means that all the electrons collectively described by $\PsiLaun$ have the lowest possible magnetic kinetic energy: they do the best they can to accommodate a huge external magnetic field, the crucial ingredient of the quantum Hall effect. This is true for any choice of $\ell$ in~\eqref{eq:PsiLau}.

\medskip

\noindent \boxed{\textbf{B}} It vanishes when electrons come close, because of the zeroes of the wave-function on the hyperplanes $z_i = z_j$. The parameter $\ell$  adjusts the rate of vanishing, but it is in fact fixed by other considerations (see below). The ansatz thus seems to do a good job at reducing repulsive interactions (Coulombic, or any other repulsive interaction for that matter). 

\medskip

\noindent \boxed{\textbf{C}} The one-particle density of $\PsiLaun$ is almost constant in a thermodynamically large (i.e. of radius $\propto \sqrt{N}$) disk, with value $B/(2\pi \ell)$. The ansatz is thus relevant to describe a homogeneous electron gas at density $B/(2\pi \ell)$, in the thermodynamic limit.  

\medskip

\noindent \boxed{\textbf{D}} The function $\PsiLaun$ is expected to be very rigid in its response to perturbations. Its natural excitations are described by simple variants, the Laughlin-plus-quasi-holes states~\eqref{eq:Psif}. There are also Laughlin-plus-quasi-\emph{particles} states, but their description is not as easy.

\medskip

Points \boxed{\textbf{A}} and \boxed{\textbf{B}} conspire together: there is not much freedom in an analytic $\times$ gaussian function to vanish upon particle encounters $z_i=z_j$. The rate of vanishing is quantized (polynomial) and the zeroes come with a phase circulation/topological degree/winding number (whose chirality is imposed by the direction of the external magnetic field). 

Another way to think of the good job the wave-function does against interactions is as a kind of super-Pauli principle. While the usual Pauli principle prevents electrons from multiple occupancy of a single-particle orbital, $\PsiLaun$ has electrons occupying only $1$ out of $\ell$ available natural orbitals. These being spatially localized (again, because of the magnetic field), one can hope this favors repulsive interactions. See~\cite{CheBis-18,Haldane-11,Haldane-18,JohetalHal-16} and references therein for more details in this direction.  Also observe that, any LLL (analytic $\times$ gaussian) many-body wave function satisfying~\eqref{eq:antisym} (i.e. the Pauli principle) must be of the form~\eqref{eq:PsiF} with $\ell=1$ (because they must vanish at $z_i=z_j$). Taking a higher exponent thus enhances the Pauli principle.

\medskip

Concerning point \boxed{\textbf{C}}, we can formulate this as a theorem (see~\cite{RouSerYng-13b} and references therein for a proof; this is in fact an instance of a large class of results on classical Coulomb gases old enough for me not to try to attribute priority, see~\cite{AndGuiZei-10,Forrester-10,Serfaty-15}): 

\begin{theorem}[\textbf{Density of the Laughlin function}]\label{thm:Lau}\mbox{}\\
For a many-body wave-function $\Psi_N : \R ^{2N} \mapsto \C$ satisfying~\eqref{eq:antisym}, define its one-body density as 
\begin{equation}\label{eq:density}
\rho_{\Psi_N} (x) := N \int_{\R^{2(N-1)}} |\Psi_N (x,x_2,\ldots,x_N)|^2 dx_2 \ldots dx_N. 
\end{equation}
Let $\rhoLaun$ be the one-body density of $\PsiLaun$. Then, with $\ell$ and $B$ fixed, 
\begin{equation}\label{eq:dens Lau}
 \int_{\R^2} \left| \rhoLaun - \frac{B}{2\pi \ell} \one_{D(0,R)}\right| \ll N  
\end{equation}
in the limit $N\to \infty$. Here $D(0,R)$ is the disk of center $0$ and radius 
$$ R = \sqrt{\frac{2\ell N}{B}}.$$
\end{theorem}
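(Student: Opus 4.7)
My plan is to exploit the classical Coulomb gas representation of $|\PsiLaun|^2$. Observing that
\[
 |\PsiLaun(z_1,\ldots,z_N)|^2 = \cLau^2 \exp\Bigl( 2\ell \sum_{i<j}\log|z_i-z_j| - \frac{B}{2}\sum_j |z_j|^2\Bigr)= \frac{1}{Z_N}e^{-2\ell \, H_N(z_1,\ldots,z_N)},
\]
with
\[
 H_N(z_1,\ldots,z_N) = -\sum_{i<j}\log |z_i-z_j| + \frac{B}{4\ell}\sum_{j=1}^N |z_j|^2,
\]
the squared modulus is the Gibbs measure of a two-dimensional one-component plasma with quadratic confinement, at inverse temperature $\beta=2\ell$. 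The density $\rhoLaun/N$ is then the first marginal of this Gibbs measure, and the question reduces to a mean-field limit for a 2D Coulomb gas.

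The second step is to rescale to macroscopic variables $y_j = z_j/\sqrt{N}$, so that the potential term is extensive and the log-interaction is of order $N^2$. Writing $\mu_N = \frac{1}{N}\sum_j \delta_{y_j}$ for the empirical measure, one finds $N^{-2} H_N \approx \mathcal{E}_{\mathrm{MF}}[\mu_N]$ modulo a logarithmic shift, where the mean-field functional is
\[
 \mathcal{E}_{\mathrm{MF}}[\mu] = -\iint \log|y-y'|\, d\mu(y)d\mu(y') + \frac{B}{2\ell}\int |y|^2 d\mu(y),
\]
minimized among probability measures $\mu$ on $\R^2$. The Euler--Lagrange equation reads $-2(\log|\cdot|)\ast\mu + \frac{B}{2\ell}|y|^2 = \mathrm{const.}$ on the support, and applying $-\Delta$ identifies the minimizer as the uniform probability measure on the disk of radius $\sqrt{2\ell/B}$, with density $B/(2\pi\ell)$; after undoing the scaling this yields precisely the Thomas--Fermi profile $\frac{B}{2\pi\ell}\one_{D(0,R)}$ with $R=\sqrt{2\ell N/B}$ appearing in~\eqref{eq:dens Lau}.

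The third step is to upgrade the variational heuristic to a genuine convergence. The idea is the standard Coulomb gas concentration argument: compare the partition function $Z_N$ with the mean-field prediction $\exp(-2\ell N^2 \mathcal{E}_{\mathrm{MF}}[\mu_{\mathrm{MF}}])$, and use the positivity of the relative Coulomb energy to deduce that realizations of the Gibbs measure whose empirical measure is far from $\mu_{\mathrm{MF}}$ pay an energy cost scaling like $N^2$. Concretely, using the coercivity bound
\[
 -\iint\log|y-y'|\,d(\mu-\mu_{\mathrm{MF}})(y)d(\mu-\mu_{\mathrm{MF}})(y') \geq 0,
\]
one obtains that the empirical measure converges (say weakly, and after suitable regularization, in $L^1$) to $\mu_{\mathrm{MF}}$. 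Passing from the empirical measure to the one-particle density $\rhoLaun$ is then routine by symmetry: the expected empirical measure \emph{is} $\rhoLaun/N$ (after rescaling), so one can estimate smoothly-tested differences and conclude $L^1$ convergence by approximation, giving the $o(N)$ bound in~\eqref{eq:dens Lau}.

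The principal obstacle is obtaining the quantitative $L^1$ control rather than mere weak convergence. Two issues are delicate: the short-scale divergence of $\log|y-y'|$ on the diagonal, which has to be regularized (e.g.\ by smearing each point charge on a small disk of radius $N^{-s}$ and controlling the self-energy correction), and the possible mass loss outside the droplet, where one needs sharp exponential tail estimates using the confining potential. Both difficulties are standard in Coulomb gas theory and are handled, e.g., by the electrostatic splitting formula or renormalized energy approach of Sandier--Serfaty, which is exactly the technology invoked in the references cited after the theorem.
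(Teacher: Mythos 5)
Your proposal is correct and follows essentially the same route the paper points to: the paper gives no proof of Theorem~\ref{thm:Lau} itself but defers to~\cite{RouSerYng-13b} and the classical Coulomb-gas literature, and the plasma analogy of Section~3.1 (Boltzmann--Gibbs rewriting of $|\PsiLaun|^2$, mean-field equilibrium measure with density $B/(2\pi\ell)$ obtained from the Euler--Lagrange equation, concentration via positivity of the Coulomb kernel) is exactly what you carry out. Your identification of the droplet radius and the reduction of the $L^1$ estimate to empirical-measure convergence are both accurate.
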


Points \boxed{\textbf{A}}, \boxed{\textbf{B}} and \boxed{\textbf{C}} above indicate (with a bit of hand-waving, perhaps) that $\PsiLaun$ should be a very good ansatz for minimizing the energy of a 2D electron gas in a uniform perpendicular magnetic field, at density\footnote{I.e. at filling factor $\nu = (hc/e) (\rho/B) = \ell ^{-1}$ in view of our implicit choice of units $\hbar = c = e = 1$.} $\rho = B (2\pi \ell) ^{-1}$. This fact has been extensively checked numerically, see~\cite{Jain-07,Girvin-04,Goerbig-09,StoTsuGos-99,Laughlin-99} for references.

\medskip

What about point \boxed{\textbf{D}} ? There are two aspects to it. One is a major open problem, that one can refer to as the \emph{spectral gap conjecture}. I have nothing new to report on it, see Appendix~\ref{app:spectral gap} for a precise formulation. The other aspect is the topic of the following exposition, so see below. 

To conclude with the physical motivation, we should emphasize the importance of point \boxed{\textbf{D}} for the theory of the FQHE. Indeed, the effect does \emph{not} occur in a \emph{homogeneous electron gas at zero temperature and density $\rho = B (2\pi \ell) ^{-1}$} but in an \emph{electron gas with impurities at small temperature and density in the vicinity of $\rho = B (2\pi \ell) ^{-1}$}. This is not a technical distinction: without impurities there would be no effect, and the essence of the QHE is a quantized plateau in the Hall conductivity of the sample for densities \emph{close to} $B /(2\pi \ell)$. The latter fact is crucial to the main application of the QHE (in metrology, setting the standard for the von Klitzing constant $h/e^2$). Again, see~\cite{Jain-07,Girvin-04,Goerbig-09,StoTsuGos-99,Laughlin-99} for introductions to the topic.

\section{The mathematical problem and the main theorem}

We turn to proposing a mathematical formulation for point \boxed{\textbf{D}} of the previous section. We first observe that the (more or less informal) arguments \boxed{\textbf{A}} and \boxed{\textbf{B}} proposed to arrive at Laughlin's wave-function~\eqref{eq:PsiLau} in fact point to the larger class of functions~\eqref{eq:PsiF} built on it. So why indeed restrict attention to~\eqref{eq:PsiLau}, or even to the functions in~\eqref{eq:Psif} decorated by quasi-holes ? \underline{Argument 1}: why should we do something complicated when we can try something simpler first ? \underline{Argument 2}: by fiddling around the proof of Theorem~\ref{thm:Lau} and/or more informal intuitions, one can guess that $\PsiLau$ will be the only function of the class~\eqref{eq:PsiF} with mean density $B/(2\pi \ell)$. But we have explained (or at least, alluded to) the fact that in FQHE physics it is crucial to understand what happens around the special density $B/(2\pi \ell)$, in particular, for smaller values. 

The problem we propose below is intended to shed some light on these issues. It takes for granted the restriction of admissible states to the class~\eqref{eq:PsiF}. It is believed that the class~\eqref{eq:PsiF} is an approximate low-energy eigenspace for the full physical Hamiltonian, separated by a gap (which does not close in the thermodynamic limit) from the rest of the spectrum.  In Appendix~\ref{app:spectral gap}, I present a problem in this direction: the class~\eqref{eq:PsiF} is an exact ground eigenspace for an approximate Hamiltonian, is it separated by a gap from the rest of the spectrum ?

\medskip

In the spirit of degenerate perturbation theory, we consider the problem of minimizing what is left of the energy, within the class~\eqref{eq:PsiF}. Since the magnetic kinetic energy is fixed by the restriction to lowest Landau level (analytic $\times$ gaussian) functions, all that is left to minimize are the interaction energy and the energy due external potentials (trapping and/or impurities). We thus arrive at the following problem 
\begin{equation}  \label{eq:qm_energy}
E (N,\lambda)=\inf\Big\{\mathcal{E}_{N,\lambda}[\Psi_F]\;|\;\Psi_F \mbox{ of the form~\eqref{eq:PsiF}},\,\int_{\mathbb{R}^{2N}}|\Psi_F|^2=1\Big\}.
\end{equation}
where 
\begin{equation} \label{eq:many_body_energy}
\mathcal{E}_{N,\lambda}[\Psi_F]=\Big\langle\Psi_F\Big|\sum_{j=1}^N V(x_j)+\lambda \sum_{i<j}W(x_i-x_j)\Big|\Psi_F\Big\rangle_{L^2}.
\end{equation}
Here $V,W:\R^2 \mapsto \R$ are the external and pair-interaction potentials respectively (seen as mutliplication operators in~\eqref{eq:many_body_energy}), and $\lambda \in \R$ is a coupling constant. Note that $W$ might be reduced a lot (this is in fact hoped for) by restricting to the class $\Psi_F$, but it is not strictly canceled (unlike the zero-range interactions considered in Appendix~\ref{app:spectral gap}), in particular it can have a long-range, 3D-Coulomb-like, part.

A more precise version of the statement~\eqref{eq:meta thm} is as follows. Consider the simpler energy
\begin{equation} \label{eq:qh_energy}
e (N,\lambda)=\inf\Big\{\mathcal{E}_{N,\lambda}[\Psi_f]\;|\;\Psi_f \text{ of the form \eqref{eq:Psif}},\,\int_{\mathbb{R}^{2N}}|\Psi_f|^2=1\Big\}.
\end{equation}
Obviously $E(N,\lambda) \leq e(N,\lambda)$. We would like to prove that 
\begin{equation} \label{eq:anticipation}
\boxed{E (N,\lambda)\simeq e (N,\lambda)\quad\text{as}\;N\to\infty \mbox{ with } \lambda \mbox{ fixed}.}
\end{equation}
This means that, for the purpose of minimizing a natural energy functional, it is sufficient to restrict to the sub-class~\eqref{eq:Psif} instead of considering the fully general~\eqref{eq:PsiF}. We can prove this under some simplifying assumptions that we now describe.

Since, in view of Theorem~\ref{thm:Lau}, the Laughlin state lives on thermodynamically large length scales $\propto \sqrt{N}$, it is natural to demand that the potentials $V$ and $W$ also do. We thus set, for fixed functions $v,w$,
\begin{equation}
V (x) = v\left(N^{-1/2} x\right)
\end{equation}
and (the $N$ pre-factor ensures that the potential and interaction energies stay of the same order when $N\to \infty$)
\begin{equation} \label{eq:rescaled_w}
W (x) = N ^{-1} w \left(N^{-1/2} x\right).
\end{equation}
Note that it \emph{is} physically relevant to consider potentials living on smaller length scales, in particular to take impurities into account. We thus simplify the physics of the problem at this point. 

In~\cite{OlgRou-19} we proved the $\lambda \neq 0$ version of the following theorem, while the (still highly non-trivial) $\lambda = 0$ case was solved earlier~\cite{LieRouYng-17,RouYng-17}:

\begin{theorem}[\textbf{Energy of the Laughlin phase}]\label{thm:ener}\mbox{}\\
Assume that $v$ and $w$ are smooth fixed functions. Assume that $v$ goes to $+\infty$ polynomially at infinity, and that it has finitely many non-degenerate critical points. There exists $\lambda_0 >0$ such that
$$ \frac{E(N,\lambda)}{e(N,\lambda)} \underset{N\to \infty}{\to} 1$$
with $B > 0$, $\ell$ a fixed odd integer and $|\lambda| \leq \lambda_0$. 
\end{theorem}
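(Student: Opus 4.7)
The plan is to identify both $E(N,\lambda)$ and $e(N,\lambda)$ with the common leading order given by the \emph{bathtub} mean-field problem
\begin{equation*}
\cE^{\rm MF}[\rho] = \int_{\R^2} v \rho + \frac{\lambda}{2}\iint_{\R^2\times\R^2} w(x-y)\rho(x)\rho(y)\,dx\,dy,
\end{equation*}
minimized over densities $\rho \geq 0$ with $\int \rho = 1$ subject to the crucial incompressibility constraint $\rho \leq B/(2\pi\ell)$ inherited from Theorem~\ref{thm:Lau}. Since $E(N,\lambda) \leq e(N,\lambda)$ holds by definition, the work is to establish the matching lower bound $E(N,\lambda) \geq N \inf \cE^{\rm MF} - o(N)$ and to show that $e(N,\lambda) \leq N \inf \cE^{\rm MF} + o(N)$.

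For the upper bound on $e(N,\lambda)$ I would construct an explicit trial state $\Psi_f$ of the form~\eqref{eq:Psif} whose one-body density approximates a near-minimizer $\rho_*$ of $\cE^{\rm MF}$. Starting from $\PsiLaun$, whose density saturates at $B/(2\pi\ell)$ on a disk by Theorem~\ref{thm:Lau}, one carves out the defect $B/(2\pi\ell) - \rho_*$ by distributing quasi-hole zeroes $\{a_k\}$ of $f$ throughout the bulk. A perturbed version of the plasma analogy (a 2D Coulomb gas at inverse temperature $2\ell$ in the external potential $\tfrac{B}{2}|z|^2 - 2\sum_k \log|z-a_k|$) then yields $\rho_{\Psi_f} \to \rho_*$ provided the $a_k$'s are placed with the appropriate local depletion density.

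The lower bound is the heart of the theorem. For $\lambda = 0$ (the case already settled in~\cite{LieRouYng-17,RouYng-17}), the key input is a \emph{universal incompressibility bound}: for any admissible $\Psi_F$, after averaging over a suitable mesoscopic ball, the one-body density satisfies $\rho_{\Psi_F} \leq B/(2\pi\ell)$. Here the plasma analogy is used in reverse: $|\Psi_F|^2 \propto \exp(-\cH_N)$ where $\cH_N$ contains the term $-2\log|F|^2$ which is \emph{sub}harmonic, and a subharmonic perturbation of a saturated Coulomb background cannot push the local density above the saturation value. Once this bound is in hand, $\int V \rho_{\Psi_F}$ is bounded below by the bathtub infimum through an elementary measure-theoretic rearrangement, matching the upper-bound construction of the previous paragraph.

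The main obstacle, and the reason for the smallness assumption $|\lambda| \leq \lambda_0$, is extending this picture to $\lambda \neq 0$: one must reduce the two-body term to its Hartree form
\begin{equation*}
\lambda \sum_{i<j}\langle \Psi_F,\, W(x_i-x_j)\,\Psi_F\rangle \;\simeq\; \frac{\lambda}{2}\iint w(x-y)\,\rho_{\Psi_F}(x)\,\rho_{\Psi_F}(y)\,dx\,dy,
\end{equation*}
which requires controlling the truncated two-point function of $\Psi_F$ on mesoscopic scales $\ll \sqrt{N}$. Proving such quantitative clustering (again via screening estimates for the 2D Coulomb plasma, now perturbed by $\log|F|^2$) is the most delicate step. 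The hypothesis $|\lambda| \leq \lambda_0$ plays a double role: it ensures that the constrained mean-field minimizer depends smoothly enough on the interaction for the bathtub analysis to go through with $w$ included, and it lets the error in the Hartree reduction be absorbed on both sides of the comparison $E(N,\lambda) \gtrsim e(N,\lambda)$ without closing a vicious bootstrap between the a priori density bound and the interaction estimate.
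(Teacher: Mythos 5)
Your proposal follows essentially the same route as the paper: reduce both $E(N,\lambda)$ and $e(N,\lambda)$ to the density-constrained mean-field (flocking/bath-tub) problem~\eqref{eq:flocking}, obtain the upper bound on $e$ by a quasi-hole trial state whose density imitates the constrained minimizer, and obtain the lower bound on $E$ from the incompressibility estimate~\eqref{eq:incomp}, with the $\lambda\neq 0$ case requiring correlation/deviation control beyond the in-expectation density bound. (One terminological slip worth fixing: $-2\log|F|$ is \emph{super}harmonic --- it is the potential of a positive ``phantom'' charge distribution --- which is exactly why it can only deplete, never enhance, the local density.)
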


\begin{proof}[Comments] \mbox{}\\
\noindent \textbf{1}. The assumption that $|\lambda|$ be small enough is probably necessary. Indeed, increasing $\lambda \geq 0$ would mean decreasing the relative influence of the external potential $V$, which in particular represents trapping. Less trapping should mean a lower mean density. But for significantly lower densities, the likely behavior of the system is not to generate more quasi-holes as in~\eqref{eq:Psif} but to reorganize into a different fractional quantum Hall state, e.g. a Laughlin wave-function with higher exponent. What the theorem shows is that for $\lambda$ small but $O(1)$ in the $N\to \infty$ limit, the system does respond by generating quasi-holes. This robustness is related to the finite width of the plateaus in the Hall conductivity, although a full explanation thereof requires many more ingredients. 

\medskip

\noindent \textbf{2}. The smoothness assumptions on the potentials should not be necessary, although our method of proof does demand some regularity. In particular, for the Coulomb interaction $w (x) = |x| ^{-1}$ the theorem is inconclusive.

\medskip

\noindent \textbf{3}. A key tool in the proof is to relate the two infima~\eqref{eq:many_body_energy}-\eqref{eq:qh_energy} to the flocking~\cite{BurChoTop-15,FraLie-16} (or, for $\lambda = 0$, bath-tub~\cite[Theorem~1.14]{LieLos-01}) energy
\begin{equation}\label{eq:flocking}
E^{\rm flo} (N,\lambda) := \inf \left\{\int_{\R^2} V \varrho + \frac{\lambda}{2} \iint_{\R^2} \varrho(x) W (x-y) \varrho(y) dxdy  , \: 0 \leq \varrho \leq \frac{B}{2\pi \ell}, \: \int_{\R^2} \varrho = N \right\}. 
\end{equation}
In fact, since $E(N,\lambda) \leq e (N,\lambda)$, it is sufficient to prove 
$$ e (N,\lambda) \lessapprox E^{\rm flo} (N,\lambda)$$
by a trial state argument, and 
\begin{equation}\label{eq:low bound}
 E(N,\lambda) \gtrapprox E^{\rm flo} (N,\lambda), 
\end{equation}
which, as for most variational problems, is the hardest inequality.

\medskip

\noindent \textbf{4}. In order to prove~\eqref{eq:low bound}, the main tool is what we coined an \emph{incompressibility estimate}, starting from~\cite{RouYng-14}. Namely, for any $L^2$-normalized $\Psi_F$ of the form~\eqref{eq:PsiF}, with $\rho_F$ the associated one-particle density~\eqref{eq:density}, we have in an appropriate sense 
\begin{equation}\label{eq:incomp}
\rho_F \leq \frac{B}{2\pi \ell} (1 + o_N(1)). 
\end{equation}
Note that this holds irrespective of the (sequence of) analytic factor(s) $F$ chosen, so that the variational set defining~\eqref{eq:many_body_energy} is in some sense included in that defining~\eqref{eq:flocking}.  

\medskip

See~\cite{LieRouYng-17,OlgRou-19,RouSerYng-13a,RouSerYng-13b,RouYng-14,RouYng-15,RouYng-17} for more details, and for corollaries regarding the minimizers of the problem.
\end{proof}

\medskip

In the next section I discuss in more details the tool~\eqref{eq:incomp}. Partial results were obtained in~\cite{RouYng-14,RouYng-15}, and a satisfactory statement (although there is still room for improvement) in~\cite{LieRouYng-16,LieRouYng-17}. For the results of~\cite{OlgRou-19} we in fact need something stronger than~\eqref{eq:incomp}: the latter is a result in expectation, for the $\lambda \neq 0$ case we need a deviation result.

\section{Incompressibility estimates for 2D Coulomb systems}

Let me now present the main insight behind~\eqref{eq:incomp}, and some elements of proof.

\subsection{Plasma analogy}

The main way one has to get to grips with Laughlin's wave-function and its descendants is via an analogy with classical statistical mechanics, more precisely the 2D one-component plasma, or log-gas, or $\beta$-ensemble, which is itself connected to random matrices via the Ginibre ensemble~\cite{AndGuiZei-10,Forrester-10,Mehta-04,Serfaty-15}. The analogy originates in the seminal paper~\cite{Laughlin-83}.

For the applications we have in mind, one is only interested in the probability density $|\Psi_F| ^2$ of a function $\Psi_F$ of form~\eqref{eq:PsiF}. One writes it as a Boltzmann-Gibbs factor,
\begin{equation}\label{eq:plasma}
 |\Psi_F (z_1,\ldots,z_N)| ^2 = \frac{1}{\mathcal{Z}_F} \exp\left( - H_F (z_1,\ldots,z_N)\right)
\end{equation}
with $\mathcal{Z}_F$ ensuring $L^1$-normalization (partition function of the effective plasma) and $H_F$ an effective Hamilton function
\begin{equation}\label{eq:class hamil}
H_F (z_1,\ldots,z_N) = \frac{B}{2} \sum_{j=1} ^N |z_j| ^2 - 2 \ell \sum_{1\leq i < j \leq N} \log |z_i-z_j| - 2 \log \left| F (z_1,\ldots,z_N) \right|. 
\end{equation}
Usually, writing a function as the exponential of its logarithm is not a particularly brilliant idea. The reason it is in this case is that, in a 2D universe, the function $H_F$ has a clear interpretation\footnote{This effective Hamiltonian is a technical tool, it has nothing to do with the original, physical, energy that $\Psi_F$ should be chosen to minimize.} in terms of electrostatics. This is because in 2D the electrostatic potential $\Phi$ generated by a charge distribution $\sigma$ is given by 
$$ - \Delta \Phi = \sigma, \quad \Phi = -\frac{1}{2\pi}\log|\,.\,| \star \sigma.$$
More precisely $H_F$ is the energy of $N$ mobile 2D particles (at locations $z_1,\ldots,z_N \in \C \leftrightarrow \R^2$) of charge $- \sqrt{4\pi \ell}$

\medskip 

\noindent \textbf{1}. Interacting among themselves via repulsive Coulomb forces. 

\medskip
 
\noindent \textbf{2}. Attracted to a fixed uniform background of charge density $+B \ell ^{-1/2} / \sqrt{\pi}$.

\medskip

\noindent \textbf{3}. Feeling the potential $\cW = -2 \log |F|$ generated by additional ``phantom'' $+$ charges. The location of the latter can be essentially arbitrary, and correlated with the positions of $z_1,\ldots,z_N$, but their charge must be positive because 
$$ - \Delta_{z_j} \cW  \geq 0$$
for any $j$ (recall that $F$ is analytic).

\medskip

Now the bound~\eqref{eq:incomp} ought to become less mysterious. The density $\rho_F$ is interpreted as the mean distribution of the charges in the plasma described above, at thermal equilibrium with temperature $1$. It turns out that this is an effectively small temperature, so that one can just as well think of the charges as distributed to minimize the energy given by $H_F$.  

Then, the density value $B/(2\pi \ell)$ appearing in~\eqref{eq:incomp} is just that corresponding to local charge neutrality for the effective plasma (compare points \textbf{1} and \textbf{2} above), a situation notoriously favorable to minimize the electrostatic energy. This is the intuitive explanation behind Theorem~\ref{thm:Lau}: for $F= 1$ there are only the effects of \textbf{1} and \textbf{2} to take into account, so that the density wants to be \emph{equal} to $B/(2\pi \ell)$. This can be made very precise~\cite{BauBouNikYau-15,BauBouNikYau-16,Leble-15b,LebSer-16}.

What about the effect of the additional potential $\cW = - 2 \log |F|$ ? We know essentially nothing of it, except that it is generated by a positive charge distribution and hence exercises a \emph{repulsive} force on the points $z_1,\ldots,z_N$. Hence it is perhaps natural that a non-trivial analytic $F$ leads to a smaller density, whence~\eqref{eq:incomp}. However this is much more subtle than it looks. The bound~\eqref{eq:incomp} should hold \emph{everywhere} in space. Why cannot one generate a local bump of charge above the preferred value $B/(2\pi \ell)$ by acting suitably with repulsive charges ?    

Consider the following thought experiment: we are given a patch of negative (mobile) and positive (fixed) charges, screening one another (i.e. the total charge distribution is locally neutral). Now add around this patch additional (phantom) positive charges (generating $\cW$), pushing on the negative charges already present. The kind of result we aim at is: however we distribute the phantom charges, the effect is that the mobile negative charges leak out the original patch, but never accumulate above the density of the fixed charges. 

Even worse: since the positions of the phantom charges can (via the zeroes of the analytic function $F$) be correlated in any way we like with the positions of the mobile charges, one could even make them ``run after'' the leaking charge. If there is some leaking, the phantom charges relocate themselves so as to push back the indisciplined mobile charges to force them to concentrate. This, we say, can only result in more leaking.  

That the bound~\eqref{eq:incomp} is true is a remarkable instance of the power of screening in electrostatics. We give a sketch of the main part of the proof of~\eqref{eq:incomp} in the next subsections.

\subsection{Density bounds for Coulomb ground states}

In this exposition we shall be content with explaining why the density bound~\eqref{eq:incomp} is true at the level of ground states of~\eqref{eq:class hamil}. Taking into account the temperature is done in a second step, and is the less optimal part of the proof as it now stands. From now one we thus focus exclusively on the effective 2D electrostatic problem described in the previous subsection.

Let us clean the notation a bit. By changing length and energy units we can consider the Hamilton function 
\begin{equation}\label{eq:class hamil 2}
\cH (x_1,\ldots,x_N) = \frac{\pi}{2} \sum_{j=1} ^N |x_j| ^2 - \sum_{1\leq i < j \leq N} \log |x_i - x_j| + \cW (x_1,\ldots,x_N) 
\end{equation}
with $x_1,\ldots,x_N \in \R ^2$ and $\cW$ superharmonic in each variable:
\begin{equation}\label{eq:superharm}
 -\Delta_{x_j} \cW \geq 0, \quad \forall j. 
\end{equation}
We consider only zero-temperature equilibrium configurations (minima of $\cH$) and want to prove that their density of points is everywhere bounded above by $1$ (the neutrality density in the new units). The following, proved in~\cite{LieRouYng-17}, shows that this is true on any length scale much larger than the typical inter-particle distance ($O(1)$ independently of $N$ in the new units): 

\begin{theorem}[\textbf{Incompressibility for 2D Coulomb ground states}]\label{thm:incomp GS}\mbox{}\\
There exists a bounded function $g:\R^+ \mapsto \R^+$, independent of $N$ and $\cW$, with 
$$
g(R) \underset{R\to \infty}{\to} 0,
$$ 
such that, for any $X_N^0 = (x_1^0,\ldots,x_N ^0)$ minimizing $\cH$, any point $a\in \R^2$ and any radius $R > 0$
\begin{equation}\label{eq:incomp GS}
N(a,R) := \sharp\left\{ x_j^0 \in X_N ^0 \cap D(a,R) \right\} \leq \pi R^2 (1 + g(R)) 
\end{equation}
where $D(a,R)$ is the disk of center $a$ and radius $R$ and $\sharp$ stands for the cardinal of a discrete set. 
\end{theorem}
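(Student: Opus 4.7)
The plan is to proceed by an energy-comparison argument: I assume for contradiction that there exist a minimizer $X_N^0$, a point $a \in \R^2$, and a radius $R$ (large) such that $n := N(a,R) > \pi R^2 (1 + \delta)$ for some $\delta > g(R)$ to be determined, and I construct a competitor configuration $X_N^1$ with $\cH(X_N^1) < \cH(X_N^0)$, contradicting minimality. The three ingredients I would lean on are (i) Newton's theorem in 2D (the potential outside a radially symmetric charge is that of a point mass of equal total charge at the center), (ii) the sharp one-component-plasma lower bound: $n$ particles confined to a disk of radius $r$ with background density $1$ pay Coulomb plus quadratic energy $\geq n^2 f(r,n)$, minimized when $r = \sqrt{n/\pi}$ (uniform ``Wigner'' filling), and (iii) the mean-value inequality for superharmonic functions, applied to $\cW$.

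First, I construct the competitor by expanding: the $n$ particles lying in $D(a,R)$ are replaced by $n$ particles distributed as a near-Fekete pattern over the concentric ``neutral'' disk $D(a,R^\star)$ with $R^\star := \sqrt{n/\pi}$, so that the new local density matches the background. Since $n > \pi R^2(1+\delta)$, we have $R^\star > R\sqrt{1+\delta}$ and the cluster is genuinely spread out. Particles of $X_N^0$ lying outside $D(a,R^\star)$ are left in place, after small perturbations to avoid collisions with the new cluster and to fit the family of rotated/translated competitors introduced below.

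Next, I estimate the change $\cH(X_N^1)-\cH(X_N^0)$ piece by piece. The jellium energy inside the cluster (Coulomb plus confining) is lowered by a macroscopic quantity of order $c\,n^2\log(R^\star/R) \geq c\,n^2\delta$, since spreading to density one is precisely the minimizer of the 2D log-gas energy on a disk. The interaction of the cluster with particles outside $D(a,R^\star)$ is unchanged, by Newton's theorem: any radially symmetric reshuffling that preserves the total charge preserves the exterior potential. The cost paid to the quadratic potential $\tfrac{\pi}{2}\sum|x_j|^2$ is only $O(n\,((R^\star)^2-R^2))$, which is dominated by the jellium gain once $R$ is large enough.

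The main obstacle is the $\cW$ term, since $\cW$ may depend in an essentially arbitrary way on the particle positions, constrained only by superharmonicity in each variable. My approach is to work with a \emph{randomized} competitor rather than a single deterministic one: I average the competitor over a family of admissible rearrangements (for instance rotations of the Fekete pattern around $a$ combined with small random displacements of each new particle inside a ball where $\cW$ is superharmonic). The mean-value inequality then forces the expected value of $\cW(X_N^1)$ under this averaging to be no larger than $\cW(X_N^0)$, independently of how $\cW$ is correlated with the configuration, so some realization of the competitor inherits this bound. Combined with the previous steps this contradicts minimality as soon as $\delta$ exceeds a quantity $g(R)$ which gathers the lower-order errors (finite-$n$ corrections to the jellium minimum, geometric adjustments near $\partial D(a,R^\star)$, size of the displacement balls, etc.) and tends to $0$ as $R\to\infty$. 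Ensuring that all these errors are genuinely uniform in $N$ and in $\cW$ — in particular that the $\cW$-averaging argument truly decouples from the Coulomb estimates — is the technically most delicate step.
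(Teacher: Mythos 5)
Your strategy (global energy comparison against a spread-out competitor) is genuinely different from the paper's, which proceeds via a local, one-particle-at-a-time argument: Lemma~\ref{lem:screening} constructs a screening region $\Sigma$ of area $K$ around any $K$ points, Lemma~\ref{lem:exclusion} shows no other point of a minimizer can sit inside $\Sigma$, and Lemmas~\ref{lem:density}--\ref{lem:support} convert this exclusion rule into the density bound by purely geometric/potential-theoretic counting. Unfortunately your route has a structural gap exactly at the step you flag as delicate: the treatment of $\cW$. The mean-value inequality for a superharmonic function $u$ says that the average of $u$ over a sphere \emph{centered at $x$} is at most $u(x)$. Averaging your competitor over small independent displacements of the new particles around their new positions $y_j$ therefore gives $\mathbb{E}\,[\cW(X_N^1)] \leq \cW(y_1,\ldots,y_n,\ldots)$, and averaging over circles centered at $a$ gives a bound by $\cW(a,\ldots,a,\ldots)$; neither is comparable to $\cW(X_N^0)$. (For a rigid rotation of the whole pattern by a common angle, superharmonicity in each variable separately gives no mean-value statement at all, since the orbit is a circle in $\R^{2n}$, not a sphere.) Since $\cW = -2\log|F|$ can have arbitrarily deep wells located precisely at the points of $X_N^0$, the difference $\cW(X_N^1)-\cW(X_N^0)$ is not bounded above by anything, no matter how you randomize the new cluster. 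This is precisely why the paper never compares $\cW$ at two distant configurations: in the exclusion rule one moves a \emph{single} particle from the interior of $\Sigma$ to its boundary, and the comparison succeeds because the screened potential $\Phi$ of~\eqref{eq:potential} is strictly positive inside $\Sigma$ and vanishes on $\partial\Sigma$, while the remaining one-particle energy (including $\cW$ in that one variable) is superharmonic on $\Sigma$, so the minimum principle applies on that fixed region.

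Two further steps are also not correct as stated, though they are less fundamental. First, the claim that the cluster--exterior interaction is unchanged ``by Newton's theorem'' fails: the original $n$ points in $D(a,R)$ are not radially symmetric about $a$, so their exterior potential is not that of a point charge at $a$, and replacing them by a radial pattern changes the interaction with the exterior points with no definite sign. Second, the confinement cost of pushing the cluster from $D(a,R)$ to $D(a,R^\star)$ is of order $n\,|a|\,(R^\star-R)$ when $a$ is far from the origin (and $|a|$ can be of order $\sqrt{N}$), not $O(n((R^\star)^2-R^2))$; handling this requires combining the confinement with the background and exterior interactions, i.e.\ exploiting local neutrality rather than treating the quadratic term in isolation. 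These two points might be repairable with more care, but the $\cW$ issue is a genuine obstruction to the energy-comparison approach.
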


\begin{proof}[Comments] 
A look at a simplified problem is instructive. Consider a positive measure $\sigma$ with 
$$ \int_{\R^2} \sigma = N,$$
the particle number, minimizing a continuous/mean-field version of the energy~\eqref{eq:class hamil 2}:
\begin{multline}\label{eq:class MF}
 \frac{\pi}{2}  \int_{\R^2} |x| ^2 \sigma (x) dx - \frac{1}{2} \iint_{\R^2 \times \R^2} \sigma (x) \log |x-y| \sigma (y) dx dy \\
 + N^{-N}\int_{\R^{2N}} \cW (x_1,\ldots,x_N) \sigma(x_1) \ldots \sigma(x_N) dx_1 \ldots dx_N. 
\end{multline}
The Euler-Lagrange for this problem says that, on the support of $\sigma$, 
$$ \frac{\pi}{2} |x| ^2 - \log |\,.\,| \star \sigma (x)+ N^{1-N} \int_{\R^{2(N-1)}} \cW (x,x_2,\ldots,x_N) \sigma(x_2) \ldots \sigma(x_N) dx_2 \ldots dx_N = \mu,$$ 
a constant (Lagrange multiplier associated with the mass constraint). Taking the Laplacian of the above equation and using~\eqref{eq:superharm} immediately gives 
$$ \sigma \leq 1.$$
The issue is that, for a general genuine many-body potential $\cW$, it does not seem feasible to reduce the minimization of~\eqref{eq:class hamil 2} for point configurations to that of~\eqref{eq:class MF} for measures. For particular $\cW$ containing only few-particle interactions one can pass rigorously~\cite{RouYng-14} from~\eqref{eq:class hamil 2} to~\eqref{eq:class MF}, which gives a particular, weaker, case of the Theorem and its applications. The proof of the general case, sketched below, does not use the continuum/mean-field approximation.
\end{proof}

\subsection{Sketch of proof for Theorem~\ref{thm:incomp GS}}

We present the four main lemmas of the proof and a few explanations of how they fit together. See~\cite{LieRouYng-16,LieRouYng-17} for more details. The method is rooted in potential theory.

 \begin{figure}[hb]
\begin{center}
\includegraphics[width=10cm]{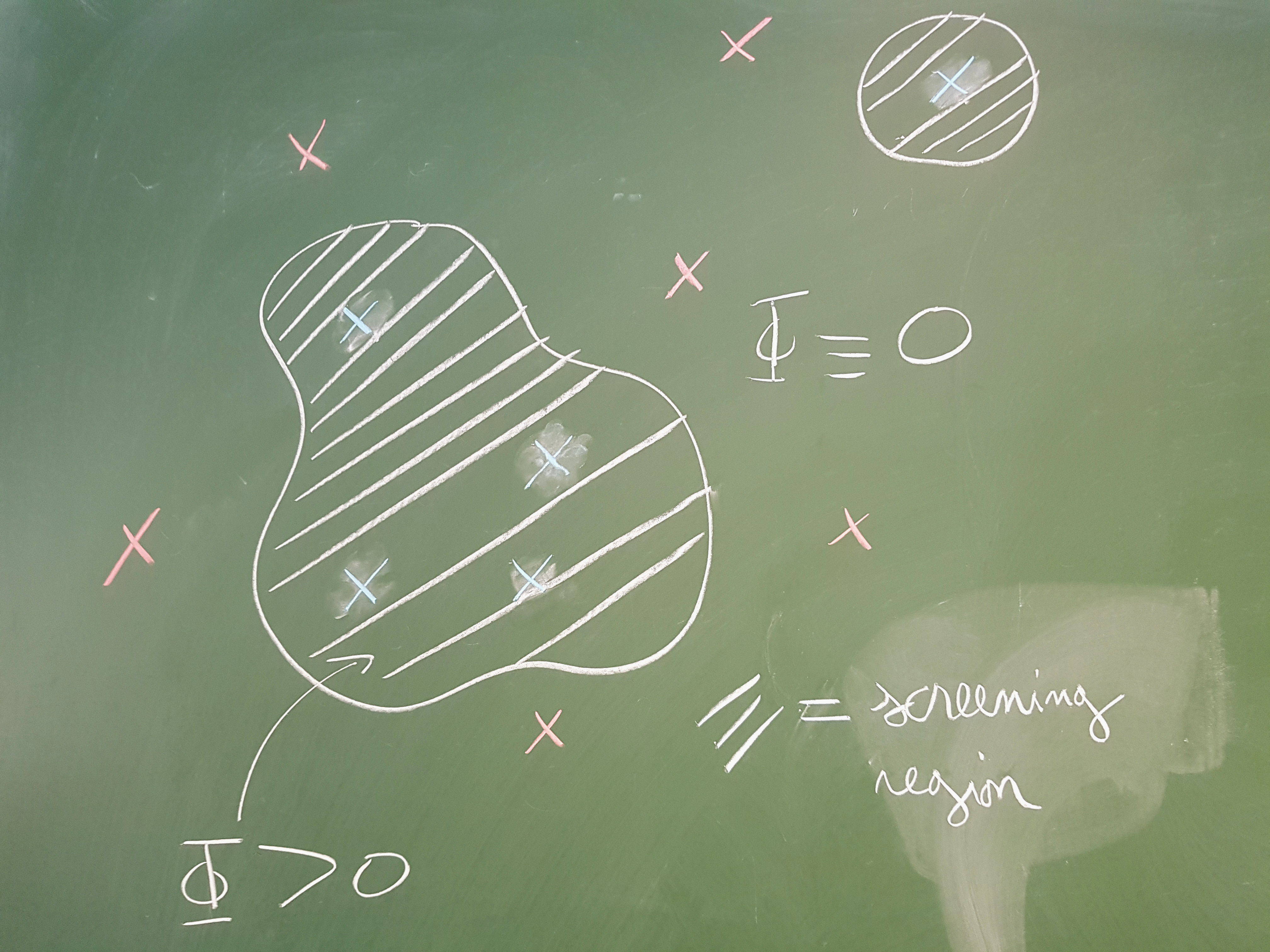}\\
%
%
%
\caption{\textbf{Exclusion rule}. The blue points generate a screening region (dashed white). No other (red) point of a minimizing configuration may lie within it.}
\label{fig:exclusion}
\end{center}
 \end{figure}

\begin{lemma}[\textbf{Screening regions}]\label{lem:screening}\mbox{}\\
 Let $x_1,\ldots,x_K$ be points in $\R^2$. There exists an open set $\Sigma = \Sigma(x_1,\ldots,x_K) \subset \R ^2$ with Lebesgue measure 
\begin{equation}\label{eq:size}
 \left| \Sigma(x_1,\ldots,x_K) \right| = K 
\end{equation}
such that the electrostatic potential 
\begin{equation}\label{eq:potential}
 \Phi := -\log |\,.\,| \star \left( \sum_{k=1} ^K \delta_{x_j} - \one_{\Sigma} \right) 
\end{equation}
satisfies 
\begin{equation}\label{eq:PhiSigma}
\begin{cases}
\Phi &> 0 \mbox{ almost everywhere in } \Sigma\\  
\Phi &= 0 \mbox{ almost everywhere in the complement of } \Sigma. 
\end{cases}
\end{equation}
\end{lemma}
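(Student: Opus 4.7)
The plan is to recognize Lemma~\ref{lem:screening} as an obstacle/free-boundary problem, namely the two-dimensional \emph{partial balayage} of the atomic measure $\sum_k \delta_{x_k}$ by Lebesgue measure with density cap one. I look for $\Phi$ directly as the solution of the Poisson system
\begin{equation*}
-\Delta \Phi = 2\pi \sum_{k=1}^K \delta_{x_k} - 2\pi \mathbf{1}_\Sigma,\qquad \Phi \geq 0,\qquad \Sigma := \{\Phi>0\},\qquad \Phi(x)\to 0\text{ as }|x|\to\infty.
\end{equation*}
Given such a $\Phi$, the dichotomy~\eqref{eq:PhiSigma} is built in by the definition $\Sigma = \{\Phi > 0\}$, and the representation~\eqref{eq:potential} follows from uniqueness of the decaying solution of Poisson's equation with the signed compactly supported source $\sum_k\delta_{x_k}-\mathbf{1}_\Sigma$ of zero total mass.

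To produce $\Phi$, I would reduce everything to the classical obstacle problem. Write $\Phi = \phi_0 - \psi$, with $\phi_0 := -\log|\cdot|\star\sum_k\delta_{x_k}$ the bare point-charge potential and $\psi := -\log|\cdot| \star \mathbf{1}_\Sigma$ the sought background potential, and set $u := \psi + \tfrac{\pi}{2}|x|^2$. A direct computation gives $\Delta u = 2\pi\mathbf{1}_{\Sigma^c} \geq 0$, so asking $u$ to be globally subharmonic is equivalent to the density cap $\mathbf{1}_\Sigma \leq 1$. The constraint $\Phi \geq 0$ becomes $u \leq f$ with obstacle $f(x) := \phi_0(x) + \tfrac{\pi}{2}|x|^2$, and $\Phi = 0$ on $\Sigma^c$ reads $u = f$ on $\Sigma^c$. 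Thus the task is to construct the largest subharmonic minorant of $f$. Note that $f$ is upper semicontinuous, equals $+\infty$ at each $x_k$ (harmless from below), and is smoothly subharmonic away from $\{x_k\}$, since $\Delta f = -2\pi\sum_k\delta_{x_k} + 2\pi$. The Perron upper envelope $u^\ast := \sup\{v \text{ subharmonic on }\R^2 : v \leq f\}$ (usc-regularized if needed) is then a well-defined subharmonic function below $f$; I set $\Phi := f - u^\ast$ and $\Sigma := \{\Phi > 0\}$. Continuity of $\Phi$ off the $x_k$ and hence openness of $\Sigma$ follow from Caffarelli-type regularity for obstacle problems, which also yields $\nabla\Phi \equiv 0$ on $\partial\Sigma$.

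The remaining step is the mass identity $|\Sigma| = K$. Once $\Sigma$ is known to be bounded, integrating the Poisson equation over a disk containing $\overline{\Sigma}$ and using the divergence theorem together with $\nabla \Phi \equiv 0$ on $\partial\Sigma$ yields $0 = 2\pi(K - |\Sigma|)$, as wanted. Boundedness itself I would argue in two stages. First, any connected component of $\Sigma$ must contain at least one $x_k$: otherwise $-\Delta\Phi = -2\pi\mathbf{1}_\Sigma$ on that component, and applying the divergence theorem with vanishing normal derivative on its boundary forces its Lebesgue measure to vanish, contradicting openness. Second, each component is bounded: outside a large ball the obstacle $f$ grows quadratically and is itself classically subharmonic, so the harmonic-replacement argument underlying Perron's construction forces $u^\ast$ to coincide with $f$ there.

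I expect the main obstacle to be precisely this control of $\Sigma$ at infinity --- ensuring that the partial balayage does not leak mass out to spatial infinity. This is exactly where the quadratic pressure $\tfrac{\pi}{2}|x|^2$ built into the effective obstacle is essential: without it one would compute a \emph{full} balayage, losing both the compactness of $\Sigma$ and the precise equality $|\Sigma| = K$. Once this compactness is secured, the rest of the argument is standard obstacle-problem technology, and uniqueness of a decaying solution of Poisson's equation identifies the constructed $\Phi$ with the logarithmic potential~\eqref{eq:potential}.
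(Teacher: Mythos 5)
Your strategy --- recasting the lemma as the partial balayage of $\sum_k \delta_{x_k}$ onto Lebesgue measure, solved via the largest subharmonic minorant of the obstacle $f = \phi_0 + \frac{\pi}{2}|x|^2$ --- is a legitimate and classical route to this statement, but it is the convex dual of what the paper actually does. As indicated in the comment following the lemma (and carried out in \cite{LieRouYng-17}), the paper works on the density side: it minimizes the 2D Coulomb energy of the signed measure $\sum_k \delta_{x_k} - \rho$ over densities $0 \le \rho \le 1$ with $\int_{\R^2}\rho = K$ (the ``incompressible neutral Thomas--Fermi molecule''), shows the minimizer saturates the constraint, $\rho = \one_\Sigma$, and reads \eqref{eq:PhiSigma} off the Euler--Lagrange/complementarity conditions. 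In that formulation the mass identity \eqref{eq:size} is a constraint rather than a theorem, and the variational characterization is what later feeds into Lemmas~\ref{lem:exclusion} and~\ref{lem:support}. Your obstacle-problem formulation produces the same object and the same free-boundary conditions; the price is that both $|\Sigma| = K$ and the compactness of $\Sigma$ must be extracted a posteriori, and that is precisely where your sketch has a genuine gap.

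The step that fails is: ``outside a large ball the obstacle $f$ \ldots is itself classically subharmonic, so the harmonic-replacement argument underlying Perron's construction forces $u^\ast$ to coincide with $f$ there.'' Local subharmonicity of the obstacle on an open set does not imply coincidence there: already for $K=1$, $x_1=0$, the obstacle $f = -\log|x| + \frac{\pi}{2}|x|^2$ is subharmonic on all of $\R^2\setminus\{0\}$, yet the non-coincidence set is the disk of area $1$ centered at the origin, not $\{0\}$. So subharmonicity of $f$ away from the charges gives nothing by itself; you need a quantitative barrier or comparison argument showing that the non-coincidence set cannot propagate far from the charges --- this is exactly the content of Lemma~\ref{lem:support} in the paper, namely $\Sigma \subset D\left(a, r + C\sqrt{\max\{|\Phi(x)| : |x-a|=r\}}\right)$, which requires first bounding $\Phi$ on a suitable circle. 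Moreover, your first stage (every component of $\Sigma$ contains some $x_k$, via the divergence theorem) presupposes that the component is bounded, or at least of finite measure with enough free-boundary regularity to integrate by parts; as written the two stages are therefore circular: stage~1 needs stage~2, and stage~2 is unsupported. The rest of your proposal --- the computation $\Delta u = 2\pi\one_{\Sigma^c}$, the $C^{1,1}$ regularity giving openness of $\Sigma$ and $\nabla\Phi = 0$ on $\partial\Sigma$, the Gauss-law derivation of $|\Sigma| = K$ once boundedness is secured, and the Liouville argument identifying $\Phi$ with \eqref{eq:potential} --- is sound.
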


\begin{proof}[Comments]
In other words, given $K$ charges in $\R ^2$, one can always screen their electrostatic potential by putting a patch of constant charge density of opposite sign around them. The proof is semi-constructive, using an auxiliary variational problem (incompressible neutral Thomas-Fermi molecule) whose solution is the indicative function of $\Sigma$.
\end{proof}

The utility for the minimizers of~\eqref{eq:class hamil 2} is as follows (see Figure~\ref{fig:exclusion} for illustration):

\begin{lemma}[\textbf{Exclusion rule}]\label{lem:exclusion}\mbox{}\\
Let $X_N^0$ be a minimizing configuration of points for~\eqref{eq:class hamil 2}. Let $K < N$ be an integer and $y_1,\ldots,y_{K+1} \in X_N ^0$. Then, $y_{K+1}\notin \Sigma (y_1,\ldots,y_K)$, where  $\Sigma (y_1,\ldots,y_K)$ is the screening region of Lemma~\ref{lem:screening}. We refer to this as the \emph{exclusion rule}.    
\end{lemma}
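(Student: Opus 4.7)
I would prove the Exclusion Rule by contradiction. Suppose $p := y_{K+1}$ lies in the open set $\Sigma := \Sigma(y_1,\ldots,y_K)$; the strategy is to construct, via Lemma~\ref{lem:screening}, a competitor configuration with strictly smaller $\cH$--energy than $X_N^0$, contradicting minimality. The first move is to reduce to a one-variable variational principle. Freeze the remaining points $y_{K+2},\ldots,y_N$ and consider
\[
F(z) \;:=\; \tfrac{\pi}{2}|z|^{2} \;-\; \sum_{i=1}^{K}\log|z-y_i| \;+\; \widehat{\cW}(z), \qquad \widehat{\cW}(z) := \cW(y_1,\ldots,y_K,z,y_{K+2},\ldots,y_N) - \sum_{j=K+2}^{N}\log|z-y_j|.
\]
Each term $-\log|z-y_j|$ is superharmonic in $z$ (its distributional Laplacian is $-2\pi\delta_{y_j}$), so $\widehat{\cW}$ inherits superharmonicity from $\cW$. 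By minimality of $X_N^0$, the map $z\mapsto F(z)$ attains its global minimum at $z=p$.

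Next I would exploit Lemma~\ref{lem:screening}. Set $U_\Sigma(z) := -\int_\Sigma \log|z-y|\,dy$, the logarithmic potential of the uniform background. The definition of $\Phi$ in \eqref{eq:potential} gives the screening identity
\[
-\sum_{i=1}^{K}\log|z-y_i| \;=\; \Phi(z) + U_\Sigma(z),
\]
which splits $F$ as $F(z) = u(z) + \Phi(z)$ with
\[
u(z) \;:=\; \tfrac{\pi}{2}|z|^{2} + U_\Sigma(z) + \widehat{\cW}(z).
\]
A direct computation gives $\Delta u = 2\pi(1-\mathds{1}_\Sigma)+\Delta\widehat{\cW}$; on the open set $\Sigma$ the $2\pi$ from the confining term is cancelled exactly by the background, so $\Delta u = \Delta\widehat{\cW}\le 0$ there. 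Thus $u$ is superharmonic on $\Sigma$, which is the mechanism by which Lemma~\ref{lem:screening} transfers into an obstruction on $p$.

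I would then apply the classical minimum principle to the lower-semicontinuous superharmonic function $u$ on the bounded open set $\Sigma$: there exists $z^{\star}\in\partial\Sigma$ with $u(z^{\star})\le u(p)$. On $\partial\Sigma$ the potential $\Phi$ vanishes by continuity — $\Phi$ is a logarithmic convolution of a compactly supported signed measure, so it is continuous off the atoms $\{y_1,\ldots,y_K\}\subset\Sigma$ and vanishes identically outside $\Sigma$ by \eqref{eq:PhiSigma}. Therefore
\[
F(z^{\star}) \;=\; u(z^{\star}) \;\le\; u(p) \;=\; F(p)-\Phi(p).
\]
Combined with the minimality estimate $F(p)\le F(z^{\star})$, this forces $\Phi(p)\le 0$. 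Since $\Phi\ge 0$ pointwise (being the potential of a nonnegative-on-its-support balayage), I conclude $\Phi(p)=0$, and it only remains to rule out this borderline case to complete the contradiction.

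The main obstacle is precisely establishing strict positivity $\Phi(p)>0$, since Lemma~\ref{lem:screening} only records the almost-everywhere statement $\Phi>0$ on $\Sigma$. To close this gap I would revisit the auxiliary Thomas--Fermi-type variational problem defining $\Sigma$: the set $\Sigma$ arises as the saturation set $\{\sigma=1\}$ of an incompressible TF minimiser, and the Euler--Lagrange conditions characterise the potential generated by $\sum\delta_{y_i}-\mathds{1}_\Sigma$ as being \emph{strictly} positive on the saturated region and zero on its complement (these are the complementary slackness conditions of the obstacle problem). Since $p\ne y_i$ for $i\le K$ by distinctness of the minimising configuration, continuity of $\Phi$ at $p$ together with this strict version of \eqref{eq:PhiSigma} yields $\Phi(p)>0$, completing the contradiction. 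A secondary, more routine, technical point is to verify that $\Sigma$ is bounded and that $u$ admits a lower-semicontinuous extension to $\overline{\Sigma}$; both are standard outputs of the Thomas--Fermi construction underlying Lemma~\ref{lem:screening}.
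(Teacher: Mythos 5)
Your proof is correct and follows essentially the same route as the paper's (much terser) argument: decompose the potential felt by $y_{K+1}$ as $\Phi$ plus a function superharmonic on $\Sigma$, invoke the minimum principle to move the point to $\partial\Sigma$ where $\Phi=0$, and contradict positivity of $\Phi$ inside $\Sigma$. The one gap you flag --- needing $\Phi>0$ \emph{pointwise} rather than a.e.\ on $\Sigma$ --- is real but is resolved exactly as you suggest: in the underlying Thomas--Fermi construction $\Sigma$ is essentially the open set $\{\Phi>0\}$ itself, so strict positivity in the interior (and vanishing on $\partial\Sigma$, by continuity of $\Phi$ off the atoms) comes for free.
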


\begin{proof}
For minimality, $y_{K+1}$ must minimize $\cW$ plus the potential generated by all the other points and the background. If $y_{K+1}\in \Sigma (y_1,\ldots,y_K)$, one would be able to decrease this total potential by moving it the boundary of $y_{K+1}$. The argument uses~\eqref{eq:PhiSigma} and the superharmonicity of $\cW$.
\end{proof}

Now we can forget about the minimality of the configuration $X_N^0$, for we have the 

\begin{lemma}[\textbf{Exclusion $\Rightarrow$ density bound}]\label{lem:density}\mbox{}\\
There exists a continuous function $g:\R^+ \mapsto \R^+$ going to $0$ at infinity such that, for any (possibly infinite) point configuration $X = x_1,\ldots,x_N,\ldots $ satisfying the exclusion rule of Lemma~\ref{lem:exclusion} and for any disk $D(a,R)$ we have 
\begin{equation}\label{eq:dens excl}
\sharp\left\{ x_j \in X \cap D(a,R)\right\} \leq \pi R^2 (1+g(R)).
\end{equation}
\end{lemma}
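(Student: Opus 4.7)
The plan is to convert the exclusion rule into a quantitative area bound through a careful use of Lemma~\ref{lem:screening}. Fix a disk $D(a,R)$ and let $N = \sharp(X \cap D(a,R))$. Applying Lemma~\ref{lem:screening} to the configuration $y_1,\ldots,y_N := X \cap D(a,R)$ produces a screening region $\Sigma$ with $|\Sigma| = N$, and Lemma~\ref{lem:exclusion} then guarantees that no other point of $X$ lies inside $\Sigma$. The proof thus reduces to a purely geometric question: showing $|\Sigma| \leq \pi R^2 (1 + g(R))$ for some $g(R) \to 0$.

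The crux would be a containment of the form $\Sigma \subset D(a, R + \eta(R))$ with $\eta(R) = o(R)$. Since $\Sigma$ is the coincidence set of an obstacle problem (the incompressible neutral Thomas--Fermi molecule alluded to in the comments of Lemma~\ref{lem:screening}), standard comparison and boundary regularity results for such problems ought to bound how far $\Sigma$ can extend beyond the convex hull of its generators. Granted this, the identity $|\Sigma| = N$ immediately gives
\[
N \leq \pi (R + \eta(R))^2 = \pi R^2 (1 + g(R)), \qquad g(R) := \tfrac{2\eta(R)}{R} + \tfrac{\eta(R)^2}{R^2},
\]
so that $g(R) \to 0$ as $R \to \infty$ as required.

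The main obstacle is the containment itself. A priori $\Sigma$ could extend as far as $D(a, \sqrt{N/\pi})$ --- consider $N$ charges clustered at a single point --- which would make the above comparison vacuous. What saves the argument is that the configuration is constrained by the exclusion rule at \emph{every} intermediate cardinality $K \leq N$: already the case $K = 1$ forces a minimum pairwise separation $\gtrsim 1/\sqrt{\pi}$, ruling out such concentrations, and the many-body instances provide progressively finer packing information. One natural path is a bootstrap, starting from the crude packing bound $N = O(R^2)$ and iteratively improving the constant by reapplying Lemma~\ref{lem:exclusion} to carefully chosen sub-configurations (including subsets that interlace $X \cap D(a,R)$ with nearby points). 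Alternatively, a more direct potential-theoretic argument based on the superharmonicity of the screening potential $\Phi$ from Lemma~\ref{lem:screening}, combined with a maximum-principle comparison against the rotationally symmetric model, should yield the boundary-layer estimate more efficiently. Either way, this step governs the quantitative rate of $g(R) \to 0$, and is the source of the regularity with which $g$ is produced.
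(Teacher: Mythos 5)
Your first reduction is exactly the paper's: screen the points of $X \cap D(a,R)$ via Lemma~\ref{lem:screening}, note that $|\Sigma| = N(a,R)$, and reduce the whole lemma to the containment $\Sigma \subset D(a,R+o(R))$. But the mechanism you propose for that containment has a genuine gap, and it sits precisely at the step the paper identifies as the heart of the matter (the ``tendrils'' of Figure~\ref{fig:bad}). Obstacle-problem boundary regularity relative to the convex hull of the generators cannot work: $\Sigma$ is determined by the interior points alone, and if their local density anywhere exceeds $1$ the coincidence set \emph{must} spill out, by a distance that can be of order $R$ if the excess is macroscopic --- ruling this out is the content of the lemma, so the argument is circular. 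The $K=1$ exclusion does give a minimal separation $1/\sqrt{\pi}$, but packings at that separation reach local densities of about $2\pi/\sqrt{3}>1$, so this route only yields $N \leq C\pi R^2$ with $C>1$, not the sharp constant. The proposed bootstrap over sub-configurations of $X\cap D(a,R)$ is not specified in a way one could check converges to the constant $\pi$, and the ``maximum-principle comparison'' is likewise not something $\Phi$ supports on its own ($\Phi$ is subharmonic in $\Sigma$ away from the charges and harmonic outside, and a radial comparison without further input is consistent with the tendril scenario).

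The missing idea is that the containment is forced not by the interior points but by the \emph{exterior} ones. By the exclusion rule, every point of $X$ outside $D(a,R)$ must avoid $\Sigma(y_1,\ldots,y_N)$, hence by~\eqref{eq:PhiSigma} the potential $\Phi$ vanishes at each of them. Granting a lower bound on the density of the configuration (which the paper notes requires a separate argument, and which your statement must also address for configurations with few or no exterior points), there are enough such zeros of $\Phi$ that $\Phi$ is uniformly small on a circle of radius slightly larger than $R$, including across any would-be tendrils. Lemma~\ref{lem:support} --- which your proposal never invokes, although the paper presents it as the final element of the proof --- then converts this smallness into $\Sigma \subset D\bigl(a, r + C\sqrt{\max|\Phi|}\bigr)$, which is exactly the containment you need. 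Without the exterior points as witnesses, nothing excludes the pathological configuration of Figure~\ref{fig:bad}.
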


\begin{proof}
Consider a minimizing configuration, and a disk $D(a,R)$. The proof is illustrated by Figure~\ref{fig:good}, to which the color code below refers. Suppose the screening region (white line) from Lemma~\ref{lem:screening} generated by the (blue) points inside the (blue) circle is included within a slightly larger circle (dashed white) of radius $\tilde{R}$. Then we know from~\eqref{eq:size} that the number of points in the disk $D(a,R)$ is not larger than $\pi \tilde{R} ^2$. If we can construct such a $\tilde{R}\sim R$ when $R\to \infty$, we have won. Since the (white) screening region may not contain any of the (red) points outside the disk, it is conceivable that the latter force the white line to stay close to the blue circle. The typical enemy we fight is a pathological configuration such as that presented in Figure~\ref{fig:bad}. The screening region sends long serpentine tendrils to infinity, still avoiding the red points. Its area is still (by definition) the number of points in the disk, but the latter can now have an area much smaller than that of the screening region.  This kind of pathology is excluded by Lemma~\ref{lem:support} below.  
\end{proof}

 \begin{figure}[h]
\begin{center}
\includegraphics[width=10cm]{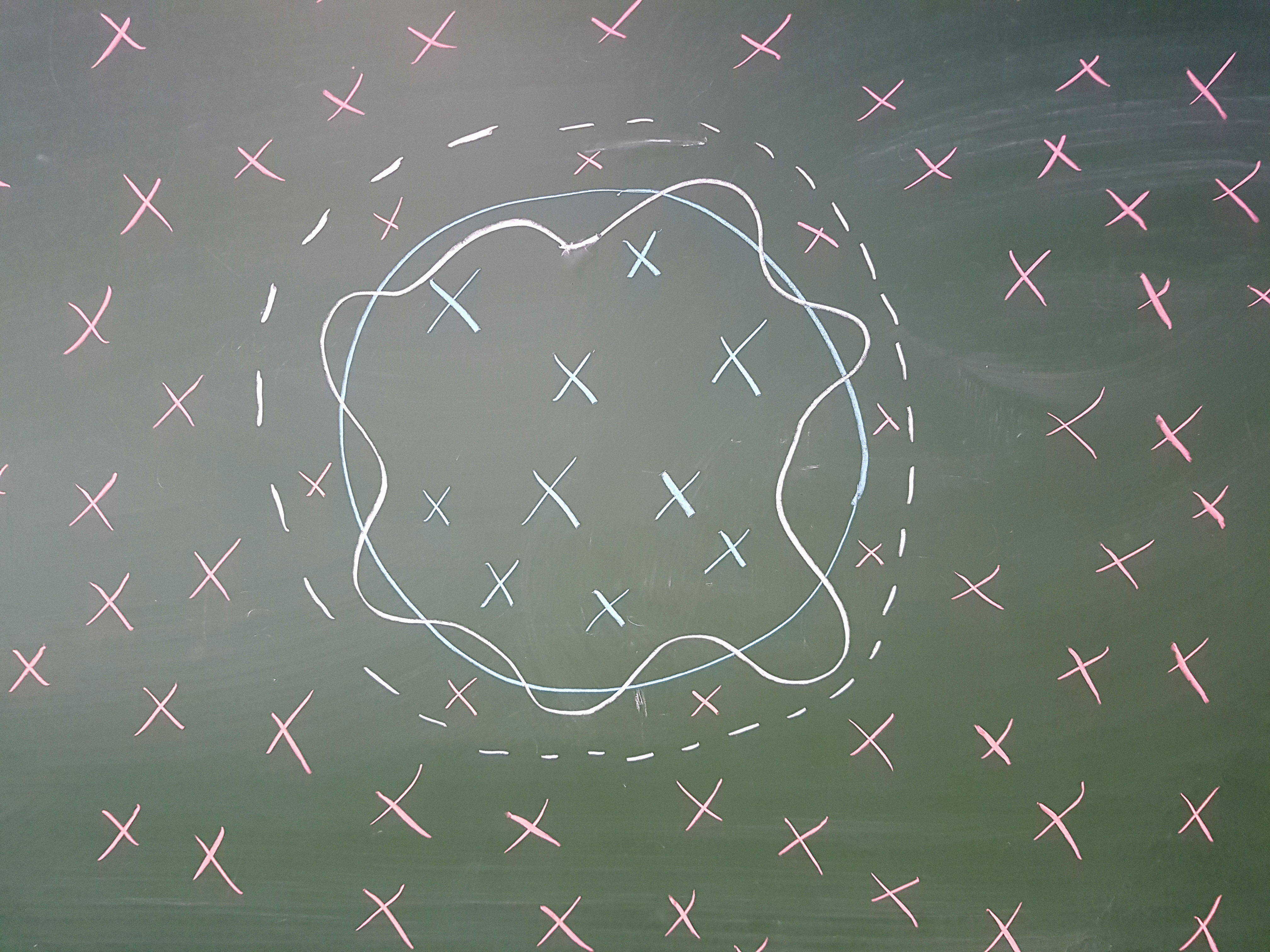}\\
%
%
%
\caption{\textbf{Good configuration}. The blue points inside the (blue) circle generate a screening region (inside of white line), avoiding the other (red) points. It is contained in a disk (inside of white dashed circle) not too large compared to the original blue circle.}
\label{fig:good}
\end{center}
 \end{figure}

 \begin{figure}[h]
\begin{center}
\includegraphics[width=10cm]{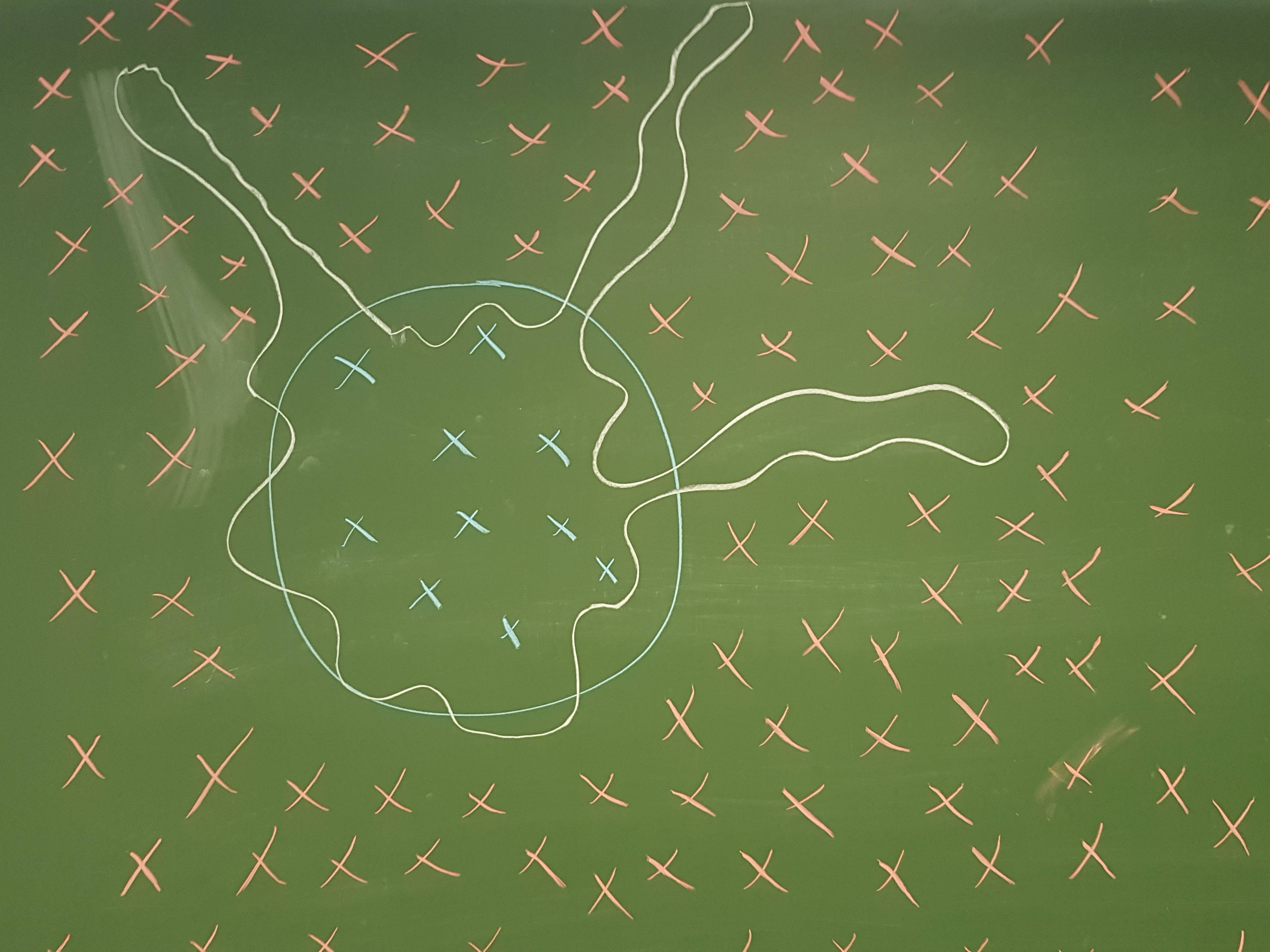}\\
%
%
%
\caption{\textbf{Pathological configuration, to be excluded}. The blue points inside the (blue) circle generate a screening region (inside of white line). It sends tendrils out to infinity, while still avoiding the other (red) points.}
\label{fig:bad}
\end{center}
 \end{figure}

Clearly Theorem~\ref{thm:incomp GS} follows from the previous lemmas. The final element of the proof of Lemma~\ref{lem:density} is 

\begin{lemma}[\textbf{Support of the screening region}]\label{lem:support}\mbox{}\\
We use the notation of Lemma~\ref{lem:support}. There exists a constant $C>0$ such that, for any $a\in \R^2$ and $r \in \R^+$
$$ \Sigma \subset D(a,R)$$
with 
$$ R = r + C \sqrt{\max\left\{ |\Phi (x)|, \: |x-a| = r\right\}}.$$
\end{lemma}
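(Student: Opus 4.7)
My approach is to compare $\Phi$ to an explicit radial barrier built from the obstacle-problem structure. Away from the points $x_k$, the potential from Lemma~\ref{lem:screening} satisfies $\Delta \Phi = 2\pi \one_{\{\Phi>0\}}$ with $\Phi \geq 0$, gradient continuous across $\partial\Sigma$, and $\Phi(x) \to 0$ as $|x| \to \infty$ (the decay at infinity coming from~\eqref{eq:size}, which forces the signed density $\sum_k\delta_{x_k}-\one_\Sigma$ to have zero total mass in $\R^2$). In the setting in which Lemma~\ref{lem:support} is applied inside Lemma~\ref{lem:density}, the $x_k$ lie inside $D(a,r)$; I make this assumption, so that $\Phi$ has no Dirac sources in $\Omega := \R^2 \setminus \overline{D(a,r)}$ and satisfies there $\Delta \Phi = 2\pi \one_\Sigma$ with $0 \leq \Phi \leq M := \max\{|\Phi(x)| : |x-a|=r\}$ on $\partial \Omega$.

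I then construct a radial envelope $\Phi^{\mathrm{sym}}(x) := u(|x-a|)$ on $\Omega$, where
\[ u(\rho) := \tfrac{\pi}{2}(\rho^2 - R_0^2) + \pi R_0^2 \log(R_0/\rho) \text{ for } r \leq \rho \leq R_0, \qquad u(\rho) := 0 \text{ for } \rho \geq R_0, \]
and $R_0 > r$ is fixed by imposing $u(r) = M$. A direct computation gives $u(R_0) = u'(R_0) = 0$, so $\Phi^{\mathrm{sym}}$ is $C^1$ across the free boundary $\{|x-a| = R_0\}$, and $u''(\rho) + \rho^{-1} u'(\rho) = 2\pi$, so $\Phi^{\mathrm{sym}}$ is itself a solution of the same obstacle problem in $\Omega$ with boundary datum $M$ on $\partial D(a,r)$. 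Writing $M = \pi r^2 [(1 - t^2)/2 + t^2 \log t]$ with $t := R_0/r \geq 1$, a Taylor expansion at $t = 1$ gives $M \sim \pi r^2 (t-1)^2$ (hence $R_0 - r \sim \sqrt{M/\pi}$), whereas for $t \geq e$ one reads off $M \geq \pi R_0^2$, so $R_0 \leq \sqrt{M/\pi}$. In all regimes $R_0 - r \leq C \sqrt M$ for a universal constant $C$, independent of $r$ and $M$.

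It remains to show $\Phi \leq \Phi^{\mathrm{sym}}$ on $\Omega$, which I obtain from the maximum principle applied to $w := (\Phi - \Phi^{\mathrm{sym}})_+$. Indeed, on the open set $\{w > 0\}$ one has $\Phi > 0$, so $\Delta \Phi = 2\pi$, while $\Delta \Phi^{\mathrm{sym}} \in \{0, 2\pi\}$; hence $\Delta w \geq 0$, i.e.\ $w$ is subharmonic there. Since $w \leq 0$ on $\partial \Omega$ by definition of $M$ and $w \to 0$ at infinity, the maximum principle forces $\{w > 0\}$ to be empty. Therefore $\Sigma \cap \Omega \subset \{\Phi^{\mathrm{sym}} > 0\} \subset D(a, R_0)$, and with the trivial inclusion $\Sigma \cap D(a,r) \subset D(a,r) \subset D(a,R_0)$ we conclude $\Sigma \subset D(a, r + C\sqrt{M})$, as claimed. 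I expect the main obstacle to lie in the explicit radial construction together with its asymptotic analysis yielding the correct $\sqrt{M}$ scaling uniformly in all regimes; once the barrier is in place the comparison step is standard obstacle-problem potential theory, and the decay at infinity guaranteed by the neutrality~\eqref{eq:size} is what allows the maximum principle to close the argument.
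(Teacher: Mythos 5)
Your barrier-plus-comparison argument is sound, and it is worth saying up front that this expository paper does not actually prove Lemma~\ref{lem:support}: the ``proof'' environment attached to it contains only interpretive comments and the way the lemma feeds into Lemma~\ref{lem:density}, the real proof being deferred to~\cite{LieRouYng-17}. So there is nothing to match line by line; what you wrote is a legitimate self-contained proof, and it is in the spirit of the original one, which likewise compares $\Phi$ with the explicit radial solution of the obstacle problem outside $D(a,r)$. Three remarks. First, your added hypothesis that the $x_k$ lie in $D(a,r)$ is not cosmetic: without it the statement as printed is false (take $K=1$ and $x_1$ far from $a$; then $\Phi$ vanishes on the circle $|x-a|=r$ while $\Sigma$ is a unit-area neighbourhood of $x_1$), so it must be read as implicit, as it is in the application inside Lemma~\ref{lem:density}; you were right to make it explicit. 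Second, the justification of the maximum principle on the unbounded domain should not rest on ``$\Phi \to 0$ at infinity by neutrality'', which tacitly assumes $\Sigma$ is bounded --- essentially the conclusion you are after; the clean route is to note that $w_+=(\Phi-\Phi^{\mathrm{sym}})_+$ satisfies the sub-mean-value inequality at every point of $\Omega$ (trivially where $w\le 0$, and by your computation where $w>0$), extends subharmonically by $0$ across $\partial D(a,r)$, and is positive only on a subset of $\{\Phi>0\}$, which by \eqref{eq:PhiSigma} and \eqref{eq:size} has Lebesgue measure at most $K$; averaging over balls of radius $\rho\to\infty$ then forces $w_+\equiv 0$. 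Third, your case analysis in $t=R_0/r$ can be compressed: with $h(t)=(1-t^2)/2+t^2\log t$ one has $h(1)=h'(1)=0$ and $h''(t)=2+2\log t\ge 2$ for $t\ge 1$, whence $M=\pi r^2 h(t)\ge \pi r^2(t-1)^2=\pi(R_0-r)^2$, i.e.\ $R_0-r\le\sqrt{M/\pi}$ in all regimes, which gives the universal constant at one stroke. None of these points is a gap in the idea; they are the technical seams that the full version in~\cite{LieRouYng-17} has to sew up.
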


\begin{proof}[Comments]
In other words, if one happens to know that the potential~\eqref{eq:potential} is small on some circle, then the screening region must sit inside a slightly larger concentric circle: there is no need to screen any further to make the potential vanish.  
 
To see that this completes the proof of Lemma~\ref{lem:density} we argue as follows. One can assume the density of point in the configuration is bounded below everywhere (this requires some argument, but clearly, a high density everywhere is the most likely enemy). Recall that (blue, red etc ... again refer to the color-code of Figures~\ref{fig:good} and \ref{fig:bad}) the potential~\eqref{eq:potential} generated by the blue points and the screening region must vanish at all the red points. Since there are many such points outside the blue circle, it takes only a small leap of faith (and/or a few estimates) to hope that the potential must in fact be small uniformly outside of the blue circle. But then Lemma~\ref{lem:support} implies that the screening region (white line) must be included in a slightly larger disk (dashed line in Figure~\ref{fig:good}). The pathological configuration of Figure~\ref{fig:bad} is thus excluded. 
\end{proof}

\newpage

\appendix

\section{The spectral gap conjecture}\label{app:spectral gap} 

Here I expose a conjecture whose resolution would go a long way towards a full rigorous justification of point \boxed{\textbf{D}} of the introduction. The conjecture is not mine: it can be traced back to the fundamental papers~\cite{Laughlin-83,Haldane-83}, and is more or less folklore in the condensed matter physics community. I am grateful to F.D.M. Haldane, E. H. Lieb and J. Yngvason in particular for discussions relating to the topic below. Previous explicit mentions of the conjecture are e.g. in~\cite{LewSei-09,RouSerYng-13b}.

\medskip 

To obtain a clean mathematical statement, we consider a toy Hamiltonian defined as follows. Let the bosonic and fermionic lowest Landau levels be respectively
\begin{align}
\LLLb &= \left\{ A(z_1,\ldots,z_N) e^{-\frac{B}{4} \sum_{j=1} ^N |z_j| ^2 }, \quad A \mbox{ analytic and symmetric} \right\} \\
\LLLf &= \left\{ A(z_1,\ldots,z_N) e^{-\frac{B}{4} \sum_{j=1} ^N |z_j| ^2 }, \quad A \mbox{ analytic and antisymmetric} \right\}
\end{align}
where symmetric/antisymmetric means ``under exchange of the labels of the coordinates $z_1,\ldots,z_N$''. On these spaces, consider the $m$-th Haldane pseudo-potential Hamiltonian 
\begin{equation}\label{eq:pseupot}
H (m,N) := \sum_{1\leq i < j \leq N} |\varphi_m \rangle \langle \varphi_m |_{ij} 
\end{equation}
where $|\varphi_m \rangle \langle \varphi_m |_{ij}$ projects the relative coordinate\footnote{Again, $\R^2 \ni x \leftrightarrow z \in \C$.} $x_i - x_j$ of particles $i$ and $j$ on the one-body state ($c_m$ is a normalization constant)
$$\varphi_m (z) = c_m z^m e^{-\frac{B}{4} |z| ^2}.$$
Note that, when acting on $\LLLb$ or $\LLLf$, only for even (respectively, odd) $m$ does $H(m,N)$ act non-trivially.

Clearly, $\PsiLaun$ is an exact ground state, i.e. eigenfunction with eigenvalue $0$ for $H(\ell-2,N)$ acting on $\LLLb$ (even $\ell$) or $\LLLf$ (odd $\ell$). The conjecture says that the gap above the eigenvalue $0$ does not close in the thermodynamic limit $N\to \infty$. To formulate it, observe first that 
$H(\ell-2,N)$ commutes with the total angular momentum operator 
\begin{equation}\label{eq:tot mom}
\cL_N := \sum_{j=1} ^N z_j \partial {z_j} - \overline{z}_j \partial_{\overline{z}_j}, 
\end{equation}
and consider a joint diagonalization of the two operators on either $\LLLb$ (if $\ell$ is even) or $\LLLf$ (if $\ell$ is odd). The angular momentum of the Laughlin state~\eqref{eq:PsiLau} is
$$ \cL_ N \PsiLaun = \frac{\ell}{2} N (N-1) \PsiLaun.$$

\begin{conjecture}[\textbf{Spectral gap conjecture}]\label{conj:spec}\mbox{}\\
Consider the spectral gap of $H_{\ell-2,N}$ on the sector of angular momenta below that of the Laughlin state
\begin{equation}\label{eq:spectral gap}
 \sigma (N,\ell) = \inf \left\{ \mathrm{spec} \left(H(\ell-2,N) _{|\cL_N \leq \frac{\ell}{2} N (N-1)} \right) \setminus \{ 0 \}\right\}.
\end{equation}
There exists a constant $c_{\ell}>0$, independent of $N$, such that 
$$ \sigma (N,\ell) \geq c_\ell >0.$$
\end{conjecture}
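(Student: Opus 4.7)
The plan is to attack Conjecture~\ref{conj:spec} in three stages: identify the zero-mode space, reformulate the gap as a frustration-free operator inequality, and then bootstrap the inequality from local gaps.

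\textbf{Stage 1 (zero modes).} A state in the (anti)symmetric LLL is annihilated by $H(\ell-2,N)$ iff every pair wave function has no component of relative angular momentum $\ell-2$. All Laughlin-phase states $\Psi_F$ of the form~\eqref{eq:PsiF} vanish to order $\ell$ at coincidences and are thus zero modes; within the sector $\cL_N \leq \frac{\ell}{2}N(N-1)$, a degree count on the analytic factor $F$ shows that $\PsiLaun$ is the only Laughlin-phase zero mode. One must additionally settle whether further zero modes exist carrying pair-momentum components strictly below $\ell-2$ which only happen to avoid the exact value $\ell-2$. Let $P_0$ denote the projector onto the full kernel of $H(\ell-2,N)\,P_\leq$, where $P_\leq$ projects onto $\{\cL_N \leq \frac{\ell}{2}N(N-1)\}$.

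\textbf{Stage 2 (frustration-free reformulation).} Since $[H(\ell-2,N),\cL_N]=0$, the gap decomposes sector by sector, and the conjecture is equivalent to the uniform-in-$N$ operator inequality
\begin{equation}
 H(\ell-2,N)\,P_{\leq} \;\geq\; c_\ell\,(\one - P_0)\,P_\leq,
\end{equation}
which is the quantitative target.

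\textbf{Stage 3 (local-to-global bootstrap).} To establish the inequality I would adapt a Knabe-type martingale strategy from the theory of gapped frustration-free spin Hamiltonians (AKLT and siblings). Cover the sample with overlapping disks $D_\alpha$ along a magnetic lattice, define partial Hamiltonians $H_\alpha$ summing only those pair projectors with both particle labels localized near $D_\alpha$, establish a uniform finite-size gap for each $H_\alpha$ (in principle by explicit diagonalization, since an $O(1)$-radius patch can only host $O(1)$ particles in the LLL at fixed filling), and combine the local gaps through a Schwarz inequality. A complementary route is to prove the conjecture first on a compact geometry (sphere or torus), where the Hilbert space is finite-dimensional and the quasi-hole basis is explicit, then transfer to the plane by a semiclassical thermodynamic limit.

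\textbf{Main obstacle.} Stage~3 is the hard step, for two compounded reasons: (a) the LLL has no crisp locality, since magnetic orbitals have Gaussian tails overlapping every partition, so a ``local'' $H_\alpha$ is not local in the spin-chain sense and the standard frustration-free induction does not apply verbatim; (b) the kernel of each $H_\alpha$ is vast, containing every Laughlin-plus-quasi-holes state supported inside $D_\alpha$, which breaks the usual mechanism of stitching local ground states into a unique global one. Restricting simultaneously to fixed $\cL_N$ geometrically confines the local zero modes and may salvage the strategy, but producing a uniform local gap in that setting appears to require genuinely new input, beyond the plasma/potential-theory toolkit behind Theorem~\ref{thm:incomp GS}.
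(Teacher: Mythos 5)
This statement is not a theorem of the paper: it is the \emph{spectral gap conjecture}, presented explicitly as a major open problem (the author writes that he has ``nothing new to report on it''), so there is no proof in the paper to compare yours against. Your submission is, by its own admission, not a proof either: it is a strategy outline whose decisive step (Stage~3) is left open. The genuine gap is therefore the entire content of the conjecture. Concretely: (a) the Knabe/martingale local-to-global machinery you invoke requires a tensor-product locality structure and a uniform local gap above a local kernel that patches consistently into a one-dimensional global kernel; in the LLL neither hypothesis holds, since magnetic orbitals are not supported in disks and, as you note, each ``local'' Hamiltonian $H_\alpha$ has an enormous kernel spanned by Laughlin-plus-quasi-holes states, so the standard frustration-free induction gives no lower bound at all rather than a weak one. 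Saying that restricting to fixed $\cL_N$ ``may salvage the strategy'' is not an argument. (b) Even Stage~1 is not complete as stated: for a \emph{single} pseudopotential $H(\ell-2,N)$ with $\ell\geq 4$, a pair amplitude can avoid relative momentum exactly $\ell-2$ while retaining components at smaller even values, so the kernel need not reduce to the Laughlin phase; identifying the full kernel (or arguing that the momentum restriction in~\eqref{eq:spectral gap} eliminates the extra zero modes) is itself unproven in your write-up. Your Stage~2 reformulation is correct but is a triviality (the gap of a commuting pair decomposes over joint eigenspaces), and the alternative route via sphere or torus geometries still requires a gap uniform in the (growing) finite dimension, which is exactly the open problem again.

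None of this is a criticism of your instincts --- the frustration-free analogy and the compact-geometry route are precisely the directions discussed in the physics literature cited in Appendix~\ref{app:spectral gap} --- but you should be clear that what you have written is a research program, not a proof, and that the paper claims no proof for you to have rediscovered.
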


To motivate the conjecture, observe that if one projects a bona-fide pair interaction Hamiltonian
$$ H_w = \sum_{1\leq i< j \leq N} w (x_i-x_j)$$
with radial potential $w\geq 0$ on the LLL, one obtains 
\begin{equation}\label{eq:LLL hamil}
H_w ^{\mathrm{LLL}} := P_{\mathrm{LLL}^N_{\rm sym/asym}} H_w \: \: P_{\mathrm{LLL}^N_{\rm sym/asym}}= \sum_{i<j} \sum_{m\geq 0} \left\langle \varphi_m | w | \varphi_m \right\rangle |\varphi_m \rangle \langle \varphi_m |_{ij}. 
\end{equation}
The coefficients $\left\langle \varphi_m | w | \varphi_m \right\rangle$ are called ``Haldane pseudo-potentials''. The toy Hamiltonian~\eqref{eq:pseupot} above is obtained by discarding all terms from the sum but one, in order for the Laughlin state to be an exact ground state, and not just a very good approximation. 

There is one particular case, namely $\ell=2$, with $H (0,N)$ acting on $\LLLb$ where this truncation of~\eqref{eq:LLL hamil} is more than a crude simplification. Indeed, $H_{0,N}$ is nothing but a Dirac-delta interaction projected on the LLL. This makes perfect sense~\cite{BerPap-99,RouSerYng-13b,LieSeiYng-09,LewSei-09,PapBer-01,GirJac-84} since LLL functions are very regular. In fact $H_{0,N}$ acts as 
\begin{multline*}
 H_{0,N} \left( A (z_1,\ldots,z_N) e ^{-\frac{B}{4} \sum_{j=1} ^N |z_j|^2} \right) = \\ 
 \frac{1}{2\pi}\sum_{1\leq i <j\leq N} A \left(z_1,\ldots, \frac{z_i+z_j}{2},\ldots, \frac{z_i+z_j}{2},\ldots ,z_N\right) e ^{-\frac{B}{4} \sum_{j=1} ^N |z_j|^2}
\end{multline*}
and this model can be derived from a true many-body Hamiltonian in a physically relevant limit~\cite{LewSei-09}.

The conjecture is widely believed to be true in the FQHE-physics community on the grounds that: 

\medskip

\noindent\textbf{1}. It is supported by numerical simulations (numerical diagonalizations of the Hamiltonian for small particle numbers, say up to $\sim 20$, see for example~\cite{Jain-07,Viefers-08} and references therein).

\medskip

\noindent\textbf{2}. Where it to be false, it would be extremely hard to make sense of the experimental data of the FQHE.

\medskip 

It should not actually be necessary to restrict the Hamiltonian to angular momenta below $\ell N(N-1)/2$ to obtain a lower bound to the spectral gap. It is likely that restricting to angular momenta below a larger value (but still of order $N^2$ when $N\to \infty$) would suffice. It is conceivable that the conjecture holds only for small values of $\ell$. If so, a likely threshold~\cite{YanHalRez-01} for the conjecture ceasing to hold is $\ell=7$. 

Finally, there are other versions of the conjecture: for particles living on a sphere or a cylinder instead of in the plane, see~\cite[Sections~3.10 and 3.11]{Jain-07} and references therein.

\bigskip 
\bigskip

\noindent \textbf{Acknowledgments.} Thanks to Elliott H. Lieb, Alessandro Olgiati, Sylvia Serfaty and Jakob Yngvason, collaborations with whom this text is based on. Thanks to F.D.M. Haldane for discussions relating to  Appendix~\ref{app:spectral gap}. Thanks to the European Research Council for funding (under the European Union's Horizon 2020 Research and Innovation Programme, Grant agreement CORFRONMAT No 758620). 

%

\end{document}